\newcommand{\cgram}{\mathcal{P}}
\newcommand{\ogram}{\mathcal{Q}}
\newcommand{\hd}{{\mathcal{H}_{2}}}
\newcommand{\hdr}[1]{{\mathcal{H}_{2,#1}}}
\newcommand{\hdo}[1]{\Vert #1 \Vert_{\mathcal{H}_{2}}}
\newcommand{\hdro}[2]{\Vert #1 \Vert_{\mathcal{H}_{2,#2}}}
\newcommand{\pare}[1]{\left ( #1 \right )}
\newcommand{\tr}[1]{\mathbf{tr}\pare{#1}}
\renewcommand{\log}[1]{\mathbf{log}\left (#1 \right)}
\newtheorem{theorem}{Theorem}
\newtheorem{remark}{Remark}
\newtheorem{definition}{Definition}
\newtheorem{proposition}{Proposition}
\newtheorem{corollary}{Corollary}
\begin{document}

\title{A Spectral Expression for the Frequency-Limited $\hd$-norm}


\author[1]{Pierre Vuillemin\thanks{pierre.vuillemin@onera.fr}}
\affil[1]{ISAE \& Onera - The French Aerospace Lab, F-31055 Toulouse, France.}
\author[2]{Charles Poussot-Vassal}
\affil[2]{Onera - The French Aerospace Lab, F-31055 Toulouse, France.}
\author[1]{Daniel Alazard}
\maketitle
\begin{abstract}
In this paper, a new simple but yet efficient \emph{spectral expression of the frequency-limited $\mathcal{H}_2$-norm}, denoted $\hdr{\omega}$-norm, is introduced. The proposed new formulation requires the computation of the system eigenvalues and eigenvectors only, and provides thus an alternative to the well established Gramian-based approach. The interest of this new formulation is in three-folds: \emph{(i)} it provides a new theoretical framework for the $\hdr{\omega}$-norm-based optimization approach, such as controller synthesis, filter design and model approximation, \emph{(ii)} it improves the $\hdr{\omega}$-norm computation velocity and it applicability to models of higher dimension, and \emph{(iii)} under some conditions, it allows to handle systems with poles on the imaginary axis. Both mathematical proofs and numerical illustrations are provided to assess this new $\hdr{\omega}$-norm expression.
\end{abstract}

%



\section{Introduction}


Norms associated to Multiple Inputs Multiple Outputs (MIMO) Linear Time Invariant (LTI) dynamical models, such as $\mathcal{H}_\infty$ and $\mathcal{H}_2$-norms, take a predominant role in many control optimization problems. Indeed, these norms are often used as minimization cost for controller \cite{Doyle:1994,ZhouBook:1997,Apkarian:2006} and observer design, filter optimization or large-scale model approximation \cite{gugercin_$mathcalh_2$_2008,VuilleminSSSC:2013}. More specifically, due to practical considerations, frequency-limited versions of these norms - \emph{i.e.} their computation over a limited frequency range - are very meaningfull for engineers. Indeed, \emph{(i)} some frequencies are physically meaningless and can be regarded as uncertainties, \emph{(ii)} for vibration control, some frequencies are more specifically of interest and \emph{(iii)} in practice, the bandwidth of actuators is limited making some frequencies irrelevant for control purpose. These points are specifically true in many applicative fields such as aerospace \cite{PoussotCEP:2012} and more particularly for gust load alleviation \cite{Alazard_ASCC06,Aouf_H2_2000}, very large-scale integration systems \cite{antoulas_approximation_2005}, civilian engineering \cite{gawronski_advanced_2004}, where attention of engineers and researchers focuses on a finite frequency interval only. In these cases, a restriction of the $\hd$-norm over a bounded frequency range can be more appropriate. Therefore, its computation plays a pivotal role in many control problems, specifically when the problem dimension increases.

This paper focusses on the $\mathcal{H}_2$-norm only, and more specifically, to the new definition (and computation) of its frequency-limited version, denoted $\hdr{\omega}$, which denotes the evaluation of the $\mathcal{H}_2$-norm from 0 to an upper bound $\omega \in \mathbb{R}_+$. As grounded on the spectral information of an LTI system, the contribution of this work stands in a \emph{new spectral expression of the frequency-limited $\mathcal{H}_2$-norm}. This new surprisingly simple $\hdr{\omega}$ formulation, which only requires the eigenvalues and eigenvectors computation, \emph{(i)} exhibits many advantages with respect to the standard approaches such as applicability to large-scale models and \emph{(ii)} provides as an alternative to the Gramian-based approach, a new framework for $\hdr{\omega}$ optimization procedures, grounded on spectral informations only.



This paper is divided as follows$\,$: Section \ref{prelim} provides preliminary results on $\mathcal{H}_2$ and $\mathcal{H}_{2,\omega}$-norms definitions and computations. In Section \ref{mainres}, the main result of the paper, \emph{i.e.} a new \emph{spectral expression for the frequency-limited $\hd$-norm} is introduced. Then, Section \ref{sec3} illustrates the behaviour of the $\hdr{\omega}$-norm and deals with the numerical considerations related to this new spectral-based formulation. Finally, Section \ref{ccl} discuses the results and potential exploitation of this result.

Mathematical notations are standard$\,$: the system state is denoted $x \in \mathbb{R}^n$, the MIMO LTI dynamical system is denoted $\Sigma$ and $H(s):=C(sI_n-A)^{-1}B+D$ is its associated transfer function. The $\hdr{\omega}$-norm stands for the frequency-limited $\hd$-norm taken over the frequency interval $\left [ 0 , \omega \right ]$, $\omega \in \mathbb{R}_+$. $A^T$ and $A^*$ are the transpose and conjugate transpose of the matrix $A$, respectively. The matrix logarithm and trace operators are denoted $\textbf{logm}(.)$ and $\textbf{tr}(.)$, respectively. $\lambda(.)$ holds for the eigenvalue operator and $\lambda_i$ denotes the $i$th eigenvalue. Let $\mathbf{Re}(z)$ and $\mathbf{Im}(z)$ be the real and imaginary parts of the complex number $z=a+jb$, where $j^2 = -1$. The residual of the complex valued function $f(z)$ at $\lambda_i$ is denoted $\phi_i=\textbf{lim}_{z \rightarrow \lambda_i} (z-\lambda_i)f(z)$. The arctangent $\mathbf{atan}(z)$, logarithm $\mathbf{log}(z)$ and arccotangent $\mathbf{acot}(z)$ of the complex variable $z$ are complex functions defined in \ref{appendix}.

\section{Preliminary results on $\hd$ and $\hdr{\omega}$-norms}
\label{prelim}

Let consider $n_u$ inputs, $n_y$ outputs MIMO LTI dynamical systems of order $n$ represented as:
\begin{equation}%
\Sigma : 	\left \lbrace 
\begin{array}{l c l}%
\dot{x}(t)&=&A x(t) + B u(t)\\%
y(t)&=&Cx(t)+Du(t)%
\end{array}%
\right .
\label{FOsys}
\end{equation}
where $A\in\mathbb{R}^{n  \times n}$, $B\in \mathbb{R}^{n\times n_u}$, $C \in \mathbb{R}^{n_y \times n}$, $D \in \mathbb{R}^{n_y \times n_u}$ and its associated transfer function $H(s) = C(sI_n-A)^{-1}B+D \in \mathbb{C}^{n_y \times n_u}$, the $\hd$-norm of $\Sigma$ is defined as follows.

\begin{definition}[$\hd$-norm] 
\label{defh2}
Given a continuous MIMO LTI dynamical system $\Sigma$ described as in \eqref{FOsys}, the $\hd$-norm of $\Sigma$, denoted by $\hdo{\Sigma}$, is given as
\begin{equation}
\hdo{\Sigma}^2 := \displaystyle \tr{\frac{1}{2\pi} \int_{-\infty}^{\infty} |H(j \nu)|^2 d\nu}.
\end{equation}
\end{definition}

\noindent
When $\Sigma$ is asymptotically stable and strictly proper, then $\hdo{\Sigma}$ is finite and can be computed as
\begin{equation}
\hdo{\Sigma}^2 = \tr{B\mathcal{Q}B^T} = \tr{C^T \mathcal{P} C},
\end{equation}
where $\cgram$ and $\ogram$ are the reachability and observability Gramians solution of the following Lyapunov equations
\begin{subeqnarray}
A \cgram + \cgram  A^T + BB^T &=&0 \mbox{~,}\\
A^T\ogram + \ogram A+C^TC &=&0.
\end{subeqnarray}
Moreover, if $\Sigma$ has simple poles $\lambda_i$ ($i=1,\dots , n$), the $\mathcal{H}_2$-norm can also be computed as \cite[chap. 5]{antoulas_approximation_2005}
\begin{equation}
\hdo{\Sigma}^2 = \displaystyle \tr{\sum_{i=1}^n \phi_i H^T(-\lambda_i)},
\label{h2res}
\end{equation}
where $\lambda_i$, $\phi_i$ are the poles and residues of $H(s)$, respectively. To the authors' knowledge, this residue-based formulation is mainly used in $\hd$ optimal model reduction since it allows one to derive first-order optimality conditions directly in terms of interpolation conditions \cite{gugercin_$mathcalh_2$_2008} .

Similarly to as Definition \ref{defh2} and as firstly suggested in \cite{anderson_measure_1991}, the frequency limited $\hd$-norm can be given as follows\footnote{This norm can for instance be used for robust performance analysis \cite{masi_robust_2010} and in model approximation \cite{peterrson}.}

\begin{definition}[$\hdr{\omega}$-norm]
Given a continuous MIMO LTI dynamical system $\Sigma$ described as in \eqref{FOsys}, the frequency-limited $\hd$-norm of $\Sigma$, denoted $\hdro{\Sigma}{\omega}$, is the restriction of its $\hd$-norm over a bounded frequency range $\left [ 0, \omega \right ]$, $\omega \in \mathbb{R}^+$, and is given as
\begin{equation}
\hdro{\Sigma}{\omega}^2 := \displaystyle\tr{\frac{1}{2\pi} \int_{-\omega}^{\omega} | H(j\nu) |^2 d\nu}.
\label{eqdef}
\end{equation}
\label{defh2w}
\end{definition}

\noindent
As rooted on the Gramian-based $\hd$-norm, for a stable and strictly proper system, the resulting frequency-limited $\mathcal{H}_2$-norm can be defined as
\begin{equation}
\hdro{\Sigma}{\omega}^2 = \tr{B \ogram_\omega B^T} = \tr{C^T \cgram_\omega C}, 
\label{h2w:lyap}
\end{equation}
where $\mathcal{P}_\omega$ and  $\mathcal{Q}_\omega$ are the frequency-limited reachability and observability Gramians \cite{gawronski_advanced_2004}
\begin{subeqnarray}
\cgram _\omega &=&\displaystyle \frac{1}{2 \pi} \int_{-\omega}^{\omega} T(\nu) BB^T T^*(\nu)d\nu \slabel{flcgram}\mbox{,}\\
\ogram _\omega & =& \displaystyle \frac{1}{2 \pi} \int_{-\omega}^{\omega} T^*(\nu) C^T C T(\nu) d\nu \slabel{flogram}
\end{subeqnarray}
where $T(\nu) = \pare{j\nu I_n-A}^{-1}$. They may alternatively be obtained by solving the following two Lyapunov equations,
\begin{subeqnarray}
A \cgram _\omega + \cgram _\omega A^T + W_c(\omega) &=&0\slabel{lyap:flg1} \mbox{~,}\\
A^T\ogram _\omega + \ogram _\omega A+W_o(\omega) &=&0\slabel{lyap:flg2}
\end{subeqnarray}
where
\begin{subeqnarray}
W_c(\omega) &=&S(\omega) BB^T + BB^T S^{*}(\omega)\mbox{~,}\\
W_o(\omega) &=&S^{*}(\omega) C^TC+C^TC S(\omega) 
\end{subeqnarray}
and
\begin{equation}
\begin{array}{rcl}
S(\omega) &=& \displaystyle \frac{1}{2\pi} \int_{-\omega}^{\omega} T(\nu)d\nu \\
&=&\displaystyle \frac{j}{2 \pi} \mathbf{logm}\pare{(A+j \omega I_n)(A-j \omega I_n)^{-1}}.
\end{array}
\end{equation}

\begin{remark}[General frequency interval]
The frequency-limited $\hd$-norm can also be defined over the bounded interval $\left [\omega_1, \omega_2 \right ]$, $\omega_1$, $\omega_2\in \mathbb{R}^+$, $\omega_1<\omega_2$, as
\begin{equation}
\hdro{\Sigma}{{\left [ \omega_1, \omega_2 \right ]}}^2 = \hdro{\Sigma}{\omega_2}^2 - \hdro{\Sigma}{\omega_1}^2.
\end{equation}
\end{remark}

\section{Main result}
\label{mainres}
%
The aim of this paper is to present a new expression (up to the authors knowledge) of the $\hdr{\omega}$-norm based on the \emph{spectral informations} of the system, \emph{i.e.}  the residues and eigenvalues and of the transfer function $H(s)$, instead of the Gramian-based approach. This new simple, yet efficient, formulation allows new perspectives from a control, observer, filter design or model reduction point of view and provides an alternative to the Gramian approach for the frequency limited $\mathcal{H}_2$-norm computation of large-scale models. The main result of the paper is stated below.

\begin{theorem}[Spectral expression of the $\hdr{\omega}$-norm] Given a continuous MIMO LTI dynamical system described as in \eqref{FOsys} of degree $n$ with simple poles $\lambda_i$, $i=1,\ldots,n$. Let $\phi_i\in \mathbb{C}^{n_y\times n_u}$ denote the corresponding residues of $H(s)$ at $\lambda_i$, i.e. $\phi_i = \lim_{s\rightarrow \lambda_i} (s-\lambda_i)H(s)$, $i = 1,\ldots,n$. Suppose that the purely imaginary poles $\lambda_k^{im}$, $k = 1,\ldots,n_{im}$ ($0 \leq n_{im} \leq n$) of $H(s)$ are such that $\omega < min\left ( \vert \lambda_k^{im} \vert \right )$. Then, the frequency-limited $\mathcal{H}_2$-norm can be written as
\begin{equation}
\small
\hdro{\Sigma}{\omega}^2 = \displaystyle \sum_{i=1}^n \sum_{k=1}^n a_{i,k} 
+ \frac{\omega}{\pi} \tr{D D^T}
- \frac{2}{\pi}\sum_{i=1}^n \tr{\phi_i D^T}\mathbf{atan}\pare{\frac{\omega}{\lambda_i}} 
\label{eqmainTH}
\end{equation}
where
\begin{equation}
a_{i,k} = 
\left \lbrace \begin{array}{lr}\displaystyle \frac{2}{\pi} \tr{\frac{\phi_i\phi_k^T}{\lambda_i+\lambda_k}}\mathbf{atan}\pare{\frac{\omega}{\lambda_i}}& \text{if }\lambda_i+\lambda_k \neq 0\\
\displaystyle -\frac{1}{\pi} \tr{\frac{\omega\phi_i \phi_k^T}{\omega^2+\lambda_i \lambda_i}}& \text{otherwise.}
\end{array} \right .
\end{equation}
\label{mainTH}
\end{theorem}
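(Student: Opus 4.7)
The natural strategy is to start from \eqref{eqdef} and substitute the partial-fraction decomposition of $H(s)$, which for simple poles reads $H(s) = D + \sum_{i=1}^n \phi_i/(s-\lambda_i)$, then evaluate the resulting scalar integrals one by one. Writing $\vert H(j\nu)\vert^2 = H(j\nu)H^T(-j\nu)$, the integrand expands into four families of terms: a constant $\tr{DD^T}$, which contributes $(\omega/\pi)\tr{DD^T}$; two linear-in-$D$ cross terms involving $\tr{\phi_i D^T}/(j\nu-\lambda_i)$ and $\tr{D\phi_k^T}/(-j\nu-\lambda_k)$; and a double sum with summand $\tr{\phi_i\phi_k^T}/[(j\nu-\lambda_i)(-j\nu-\lambda_k)]$. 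By linearity, the theorem thus reduces to three elementary scalar integrals.

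The first is $\int_{-\omega}^{\omega}d\nu/(j\nu-\lambda_i)$. A primitive is $-j\,\mathbf{log}(j\nu-\lambda_i)$; evaluating between $-\omega$ and $\omega$ and using the classical identity $\mathbf{atan}(z) = \tfrac{1}{2j}\mathbf{log}\pare{(1+jz)/(1-jz)}$ yields $-2\,\mathbf{atan}(\omega/\lambda_i)$. The companion integral $\int_{-\omega}^{\omega}d\nu/(-j\nu-\lambda_k)$ gives $-2\,\mathbf{atan}(\omega/\lambda_k)$ by the same computation. Since $\tr{\phi_i D^T} = \tr{D\phi_i^T}$, the two $D$-linear contributions merge into the single term $-\tfrac{2}{\pi}\sum_i \tr{\phi_i D^T}\mathbf{atan}(\omega/\lambda_i)$ appearing in \eqref{eqmainTH}.

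For the double sum, a partial-fraction decomposition with respect to $\nu$ gives, when $\lambda_i+\lambda_k\neq 0$,
\[\frac{1}{(j\nu-\lambda_i)(-j\nu-\lambda_k)} = -\frac{1}{\lambda_i+\lambda_k}\pare{\frac{1}{j\nu-\lambda_i}+\frac{1}{-j\nu-\lambda_k}},\]
which reduces the integral to two copies of the previous one; invoking the $(i,k)\leftrightarrow(k,i)$ symmetry $\tr{\phi_i\phi_k^T} = \tr{\phi_k\phi_i^T}$ to combine the two $\mathbf{atan}$ contributions produces the factor $2\,\mathbf{atan}(\omega/\lambda_i)$ in $a_{i,k}$. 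In the degenerate case $\lambda_i+\lambda_k=0$ the integrand becomes $-1/(j\nu-\lambda_i)^2$, which admits the antiderivative $j/(j\nu-\lambda_i)$ and integrates to $-2\omega/(\omega^2+\lambda_i^2)$, giving the second branch of $a_{i,k}$.

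The main obstacle is justifying that these complex primitives genuinely compute the integrals along the segment $\nu\in[-\omega,\omega]$: each underlying $\mathbf{log}$ and $\mathbf{atan}$ has a branch cut, and the integrand itself is singular whenever $j\nu$ coincides with a pole of $H$. The hypothesis $\omega<\min_k\vert\lambda_k^{im}\vert$ is exactly what guarantees that the imaginary segment $j[-\omega,\omega]$ avoids every pole, while the choice of branches in \ref{appendix} is designed so that the antiderivatives remain single-valued as $\nu$ sweeps the segment; once this is verified, the four contributions assemble into \eqref{eqmainTH}.
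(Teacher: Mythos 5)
Your proof is correct and follows essentially the same route as the paper's: partial-fraction expansion of $H$, the four-term split of the integrand, the $\mathbf{atan}$ identity for the $D$-cross terms, partial fractions plus the $(i,k)\leftrightarrow(k,i)$ symmetry for the non-degenerate double sum, and direct integration of $-1/(j\nu-\lambda_i)^2$ in the degenerate case (the paper uses the difference-of-logarithms form of $\mathbf{atan}$ rather than your quotient form, but it notes these coincide under the hypothesis on $\omega$). One cosmetic slip: the antiderivative of $-1/(j\nu-\lambda_i)^2$ is $-j/(j\nu-\lambda_i)$, not $j/(j\nu-\lambda_i)$, though your final value $-2\omega/(\omega^2+\lambda_i^2)$ is correct.
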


\begin{proof}
Let us consider the expression of the $\hdr{\omega}$-norm of $\Sigma$, as given in Definition \ref{defh2w}
\begin{subeqnarray}
\hdro{\Sigma}{\omega}^2 &:=&\tr{\displaystyle \frac{1}{2 \pi} \int_{-\omega}^{\omega} \vert H(j\nu) \vert ^2 d\nu}\\
&=&\tr{\frac{1}{2\pi} \int_{-\omega}^{\omega} H(j \nu) H(-j\nu)^T d\nu}.
\end{subeqnarray}
As $H(s)$ has simple poles, it can be written as
\begin{equation}
H(s) = \displaystyle \sum_{i=1}^n \frac{\phi_i}{s-\lambda_i}+D.
\end{equation}
By noting $\tilde{H}(s) = \sum_{i=1}^n \frac{\phi_i}{s-\lambda_i}$, it comes that
\begin{equation}
\hdro{\Sigma}{\omega}^2 =\displaystyle \frac{1}{2\pi} \mathbf{tr} \left ( \int_{-\omega}^{\omega} DD^T+ \tilde{H}(j\nu)D^T+D\tilde{H}(-j\nu)^T +\tilde{H}(j\nu) \tilde{H}(-j\nu)^T  d\nu \right ).
\label{fourInt}
\end{equation}
Each term of this integral is then considered separately:
\begin{enumerate}[label=(\roman{*})]
\item Considering the first term, it follows that
\begin{equation}
\displaystyle \frac{1}{2 \pi} \int_{-\omega}^{\omega} DD^T d\nu = \frac{\omega}{\pi} D D^T.
\end{equation}
\item Greater attention should be turned to the following (second) integral
\begin{equation}
\displaystyle \frac{1}{2 \pi} \int_{-\omega}^{\omega} \tilde{H}(j\nu) D^T d\nu =\frac{1}{2 \pi} \sum_{i=1}^n \int_{-\omega}^{\omega} \frac{\phi_i}{j\nu-\lambda_i} D^T d\nu.
\label{inte}
\end{equation}
Indeed, if $\mathbf{Re}\pare{\lambda_i} = 0$, then $ \frac{\phi_i}{j\nu-\lambda_i}$ is integrable over $\left [ -\omega ,\omega \right ]$ if and only if $\omega < \vert \lambda_i \vert$. Hence, in the sequel, the following assumption is made:
\begin{equation}\omega <\underset{\mathbf{Re}(\lambda_i)=0}{min} \pare{|\lambda_i|}.
\label{hyp1}
\end{equation}
It implies that $ \frac{\phi_i}{j\nu-\lambda_i}$ is integrable over $\left [ -\omega ,\omega \right ]$ for all $i=1,\ldots,n$. Under this assumption and invoking formulation \eqref{atan1} of Definition \ref{def:atan} given in \ref{appendix}, which states that $\mathbf{atan}(z) = \displaystyle \frac{1}{2 j} \left [\log{1+j z}-\log{1-jz} \right ]$ for $z \neq \pm j$, it comes that
\begin{subeqnarray}
\int_{-\omega}^{\omega} \frac{\phi_i}{j\nu-\lambda_i}D^T d\nu &=&j \phi_i \left [\log{-j\omega-\lambda_i}  -\log{j\omega-\lambda_i}\right ]D^T \\
&=&-2 \phi_i \mathbf{atan}\pare{\frac{\omega}{\lambda_i}}D^T.
\label{difflog}
\end{subeqnarray}
Thus
\begin{equation}
\displaystyle \frac{1}{2 \pi} \int_{-\omega}^{\omega} \tilde{H}(j\nu) D^T d\nu =-\frac{1}{\pi} \sum_{i=1}^n \phi_i D^T \mathbf{atan}\pare{\frac{\omega}{\lambda_i}}.
\end{equation}
\item In a similar way
\begin{equation}
\displaystyle \frac{1}{2 \pi} \int_{-\omega}^{\omega} D\tilde{H}(-j\nu)^T d\nu = -\frac{1}{\pi} \sum_{i=1}^n D \phi_i^T \mathbf{atan}\pare{\frac{\omega}{\lambda_i}}.
\end{equation}
\item Regarding the last term of \eqref{fourInt}, one have
\begin{equation}
\tilde{H}(j\nu)\tilde{H}(-j\nu)^T = \sum_{i=1}^n \sum_{k=1}^n \frac{\phi_i \phi_k^T}{\pare{j\nu-\lambda_i}\pare{-j\nu-\lambda_k}}
\end{equation}
and it follows that
\begin{equation}
\displaystyle \frac{1}{2\pi} \int_{-\omega}^{\omega} \tilde{H}(j\nu)\tilde{H}(-j\nu)^T d\nu =  
\sum_{i=1}^n \sum_{k=1}^n \underbrace{\frac{1}{2 \pi} \int_{-\omega}^{\omega} \frac{\phi_i \phi_k^T}{\pare{j\nu-\lambda_i}\pare{-j\nu-\lambda_k}}d\nu}_{a_{i,k}}.
\end{equation}
From here, two cases must be considered here:
\begin{enumerate}
\item If $\lambda_i + \lambda_k \neq 0$, then
\begin{equation}
a_{i,k} = \frac{1}{2 \pi} \int_{-\omega}^{\omega} \displaystyle \frac{\phi_i \phi_k^T}{\pare{j\nu-\lambda_i}\pare{-j\nu-\lambda_k}} d\nu=\frac{1}{2 \pi} \int_{-\omega}^{\omega} \left ( \frac{p_{i,k}}{j\nu-\lambda_i}+\frac{p_{i,k}}{-j\nu-\lambda_k} r\nu\right ),
\end{equation}
with $p_{i,k} = - \frac{\phi_i \phi_k^T}{\lambda_i+ \lambda_k}$.
Because of assumption \eqref{hyp1}, each term can then be integrated as previously and one gets
\begin{equation}
a_{i,k} = \displaystyle \frac{1}{\pi} \frac{\phi_i \phi_k^T}{\lambda_i + \lambda_k} \mathbf{atan}\pare{\frac{\omega}{\lambda_i}}+\frac{1}{\pi} \frac{\phi_i \phi_k^T}{\lambda_i + \lambda_k} \mathbf{atan}\pare{\frac{\omega}{\lambda_k}}.
\end{equation}
Since $p_{i,k} = p_{k,i}^T$, the sums can be reordered as follows
\begin{equation}
a_{i,k} =  \displaystyle \frac{1}{\pi} \frac{\phi_i \phi_k^T+\phi_k \phi_i^T}{\lambda_i + \lambda_k} \mathbf{atan}\pare{\frac{\omega}{\lambda_i}}.
\end{equation}
\item If $\lambda_i+\lambda_k = 0$, then
\begin{equation}
a_{i,k} = \frac{1}{2 \pi} \int_{-\omega}^{\omega} \displaystyle \frac{\phi_i \phi_k^T}{\pare{j\nu-\lambda_i}\pare{-j\nu-\lambda_k}}d\nu =\frac{1}{2 \pi} \int_{-\omega}^{\omega} -\frac{\phi_i \phi_k^T}{ \pare{j\nu-\lambda_i}^2} d\nu.
\end{equation}
After integration, one obtains
\begin{subeqnarray}
a_{i,k} &=& \displaystyle\frac{1}{2\pi}\pare{-j  \frac{\phi_i \phi_k^T}{j\omega-\lambda_i}+j \displaystyle \frac{\phi_i \phi_k^T}{-j\omega-\lambda_i} }\\
&=&-\frac{1}{\pi} \frac{\omega \phi_i \phi_k^T}{w^2+\lambda_i\lambda_i}.
\end{subeqnarray}
\end{enumerate}
Taking the trace of each term and adding them up leads to the result of Theorem \ref{mainTH}.
\end{enumerate}
\end{proof}

\begin{corollary}[Stable and stricly proper case]
Given assumptions of Theorem \ref{mainTH}, for a continuous stable and strictly proper MIMO LTI dynamical system of degree $n$ with simple poles, the frequency-limited $\hd$-norm is given as
\begin{subeqnarray}
\small
\hdro{\Sigma}{\omega}^2& =& \displaystyle  \tr{\sum_{i=1}^n \sum_{k=1}^n \frac{2}{\pi} \frac{\phi_i\phi_k^T}{\lambda_i+\lambda_k}\mathbf{atan}\pare{\frac{\omega}{\lambda_i}}}\\
&=&\displaystyle \tr{ \sum_{i=1}^n  \phi_i H^T(-\lambda_i) \left ( -\frac{2}{\pi}\mathbf{atan}\pare{\frac{\omega}{\lambda_i}} \right )} \slabel{simpleCase}.
\end{subeqnarray}
\end{corollary}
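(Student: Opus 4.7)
The plan is to deduce the Corollary as a direct specialization of Theorem \ref{mainTH}. First, strict properness gives $D = 0$, which immediately kills both the $\frac{\omega}{\pi}\tr{DD^T}$ term and the cross term $-\frac{2}{\pi}\sum_i \tr{\phi_i D^T}\atan{\omega/\lambda_i}$ in \eqref{eqmainTH}. What remains is the double sum $\sum_{i,k} a_{i,k}$.

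Second, I would check that the degenerate branch $\lambda_i + \lambda_k = 0$ in the definition of $a_{i,k}$ cannot occur under the asymptotic stability assumption. Indeed, stability forces $\mathbf{Re}(\lambda_i) < 0$ for every $i$, so $\mathbf{Re}(\lambda_i+\lambda_k) < 0$ and in particular $\lambda_i+\lambda_k \neq 0$; moreover the set of purely imaginary poles is empty, so the hypothesis $\omega < \min(|\lambda_k^{im}|)$ of Theorem \ref{mainTH} is vacuous. Consequently only the first branch of $a_{i,k}$ contributes, and after moving the trace inside the double sum one obtains the first equality of the Corollary.

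Third, to derive the residue/transfer-function form \eqref{simpleCase}, I would expand $H^T(-\lambda_i)$ via partial fractions. Since $D = 0$, $H(s) = \sum_k \phi_k/(s-\lambda_k)$, hence
\begin{equation}
H^T(-\lambda_i) = \sum_{k=1}^n \frac{\phi_k^T}{-\lambda_i-\lambda_k} = -\sum_{k=1}^n \frac{\phi_k^T}{\lambda_i+\lambda_k}.
\end{equation}
Multiplying on the left by $\phi_i$ and by the scalar factor $-\frac{2}{\pi}\atan{\omega/\lambda_i}$, summing over $i$ and taking the trace reproduces exactly the double sum established in the first step, which proves the equivalence of the two forms.

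There is no real obstacle here beyond careful bookkeeping: the Corollary follows by inspection from Theorem \ref{mainTH} once the stability hypothesis is used to discard the degenerate case and to justify the vacuousness of the imaginary-pole condition, and the second identity is a one-line rewriting using the partial fraction expansion of $H$.
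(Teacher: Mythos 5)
Your proposal is correct and follows essentially the same route as the paper: specialize Theorem \ref{mainTH} with $D=0$ and observe that asymptotic stability rules out the degenerate branch $\lambda_i+\lambda_k=0$. The only difference is that you spell out the second equality via the partial-fraction expansion $H^T(-\lambda_i) = -\sum_k \phi_k^T/(\lambda_i+\lambda_k)$, a step the paper leaves implicit; your version is a welcome (and correct) clarification rather than a departure.
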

\begin{proof}
The proof if straightforwardly obtained from \eqref{eqmainTH} with $D =0$ and by noting that as the system is stable, then $\lambda_i + \lambda_k \neq 0$, $i = 1,\ldots , n$, $k = 1,\ldots,n$.
\end{proof}

\begin{remark}[About the $\mathbf{atan}$ term]
\label{rmq:atan}
Equation \eqref{simpleCase} is very similar to the $\hd$-norm expression \eqref{h2res} presented in \cite[chap. 5]{antoulas_approximation_2005}. Indeed, the above formula involves only the new term: $-\frac{2}{\pi}\mathbf{atan}\pare{\frac{\omega}{\lambda_i}}$, which plays the role of weight for each modal contribution of the product $\phi_iH^T(-\lambda_i)$, as a function of $\omega$.
\end{remark}

\begin{remark}[Imaginary eigenvalues $\lambda_k^{im}$]
In Theorem \ref{mainTH}, assumption is made that if $H(s)$ has any imaginary poles $\lambda_k^{im}$, $k = 1,\ldots,n_{im}$ ($0 \leq n_{im} \leq n$), then $\omega$ must satisfy
\begin{equation}
\omega < min\left ( \vert \lambda_k^{im} \vert \right ).
\label{hyp}
\end{equation}
With reference to equation \eqref{difflog}, this assumption makes both formulations \eqref{atan1} and \eqref{atan2} in \ref{appendix} of the principal value of the complex arctangent, equivalent. Indeed, according to \eqref{ataneq}, the two formulations only differ when $\mathbf{Re}(z) = 0$ and $\mathbf{Im}(z) <-1$ which is never reached by $z =  \frac{\omega}{\lambda_i}$ due to hypothesis stated in relation \eqref{hyp}.
\end{remark}

\begin{remark}[Meaning of the $\hdr{\omega}$-norm for unstable systems] \label{rmq:H2unstable}
It is important to note that the $\hdr{\omega}$-norm and the $\hd$-norm are linked only for stable systems. Indeed for unstable systems, the $\hdr{\omega}$-norm is simply the integral \eqref{eqdef} and has no physical meaning.
\end{remark}

\begin{proposition}
\label{proposition}
Given a continuous and strictly proper MIMO LTI dynamical system described as in \eqref{FOsys}, the behaviour of the $\hdr{\omega}$-norm of $\Sigma$ as $\omega \to \infty$ can be summarized as follows:
\begin{enumerate}[label=(\roman{*})]
\item If $\Sigma$ is stable, then $\lim_{\omega \to \infty} \hdro{\Sigma}{\omega} = \hdo{\Sigma}$.
\item If $\Sigma$ is unstable, then $\lim_{\omega \to \infty} \hdro{\Sigma}{\omega}$ is finite (see Remark \ref{rmq:H2unstable}).
\item If $\Sigma$ has one (or more) purely imaginary pole(s), then $\lim_{\omega \to \infty} \hdro{\Sigma}{\omega} = \infty$.
\end{enumerate}
\end{proposition}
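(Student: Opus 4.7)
The plan is to treat the three cases separately, using the spectral expression of Theorem~\ref{mainTH} for (i) and (ii), but reverting to the integral definition \eqref{eqdef} for (iii) since the spectral formula requires $\omega$ to be smaller than every imaginary pole modulus. Throughout the proof, strict properness gives $D = 0$, which eliminates the last two sums of \eqref{eqmainTH}.

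For (i), asymptotic stability forbids $\lambda_i + \lambda_k = 0$, so only the first branch of $a_{i,k}$ is active. The key computational step is $\lim_{\omega\to\infty}\mathbf{atan}\pare{\omega/\lambda_i}$: since $\mathbf{Re}\pare{1/\lambda_i} = \mathbf{Re}(\lambda_i)/|\lambda_i|^2 < 0$, the argument $\omega/\lambda_i$ goes to infinity through the open left half-plane, and the principal-value definition \eqref{atan1} in \ref{appendix} gives the limit $-\pi/2$. Inserting this into \eqref{eqmainTH} reduces the double sum to $-\sum_{i,k}\tr{\phi_i\phi_k^T/(\lambda_i+\lambda_k)}$, which — after substituting the partial-fraction identity $H^T(-\lambda_i) = \sum_k \phi_k^T/(-\lambda_i - \lambda_k)$ — matches the residue expression \eqref{h2res} of $\hdo{\Sigma}^2$.

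For (ii), the cleanest path avoids Theorem~\ref{mainTH} altogether: I would use the definition $\hdro{\Sigma}{\omega}^2 = \frac{1}{2\pi}\int_{-\omega}^{\omega} \tr{H(j\nu)H(j\nu)^*}d\nu$ and observe two facts. Strict properness gives $\tr{H(j\nu)H(j\nu)^*} = O(1/\nu^2)$ as $|\nu|\to\infty$, while the absence of imaginary poles makes the integrand continuous on all of $\mathbb{R}$. Together these properties make the improper integral absolutely convergent, and its value is by definition $\lim_{\omega\to\infty}\hdro{\Sigma}{\omega}^2$. (One could alternatively verify the same conclusion from \eqref{eqmainTH} by observing that every $\mathbf{atan}\pare{\omega/\lambda_i}$ has a finite limit $\pm\pi/2$ and that the resonance contributions with $\lambda_i+\lambda_k=0$ decay like $1/\omega$.)

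For (iii), let $\lambda^{im} = j\omega_0$ be a simple imaginary pole with residue $\phi\in\mathbb{C}^{n_y\times n_u}$; minimality of the mode ensures $\phi\neq 0$, hence $\tr{\phi\phi^*}>0$. Separating this term in the partial-fraction expansion yields $H(j\nu) = \phi/\pare{j(\nu-\omega_0)} + R(j\nu)$ with $R$ continuous in a neighbourhood of $\omega_0$, so $\tr{H(j\nu)H(j\nu)^*} \geq \tr{\phi\phi^*}/\pare{2(\nu-\omega_0)^2}$ for $\nu$ sufficiently close to $\omega_0$. This lower bound is non-integrable across $\omega_0$, so $\hdro{\Sigma}{\omega}^2 = +\infty$ for every $\omega>\omega_0$, giving $\lim_{\omega\to\infty}\hdro{\Sigma}{\omega}=\infty$. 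The main obstacle is handling the complex arctangent branch in (i) carefully: establishing that the principal-value $\mathbf{atan}$ of \ref{appendix} converges to $-\pi/2$ along any sequence going to infinity in the open left half-plane is what glues the spectral formula to the classical residue expression of $\hdo{\Sigma}^2$.
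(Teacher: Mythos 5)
Your argument is correct, and part (i) coincides in substance with the paper's proof: both reduce to showing $\mathbf{atan}(\omega/\lambda_i)\to-\pi/2$ for $\mathbf{Re}(\lambda_i)<0$ (the paper passes through $\mathbf{acot}$ and its limit \eqref{limacot} at $0$, you argue directly on $\omega/\lambda_i\to\infty$ in the open left half-plane; these are the same computation) and then identify $-\sum_{i,k}\tr{\phi_i\phi_k^T/(\lambda_i+\lambda_k)}$ with the residue formula \eqref{h2res}. For (ii) and (iii) you take genuinely different routes. The paper proves (ii) by again letting $\omega\to\infty$ in \eqref{simpleCase}, with $\mathbf{atan}(\omega/\lambda_i)\to+\pi/2$ for the antistable poles, which yields the explicit limit $\tr{\sum\phi_iH^T(-\lambda_i^-)}-\tr{\sum\phi_iH^T(-\lambda_i^+)}$; your dominated-convergence argument ($O(1/\nu^2)$ decay from strict properness plus continuity of the integrand on $\mathbb{R}$, combined with monotonicity in $\omega$) is more elementary and only delivers finiteness, which is all the proposition claims, but loses the closed form. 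For (iii) the paper merely observes that hypothesis \eqref{hyp} is violated and that the integral \eqref{inte} becomes infinite; your local lower bound $\tr{H(j\nu)H(j\nu)^*}\geq\tr{\phi\phi^*}/\bigl(2(\nu-\omega_0)^2\bigr)$ near a simple imaginary pole is actually the cleaner justification, since \eqref{inte} is the cross term with $D^T$ and vanishes identically for the strictly proper systems the proposition addresses --- the divergence really sits in the quadratic term $a_{i,i}$ with $\lambda_i+\bar{\lambda}_i=0$, which is exactly what your estimate captures. The one point worth making explicit in your write-up of (iii) is that a ``purely imaginary pole of $\Sigma$'' must be read as a pole of $H(s)$ (so that $\phi\neq0$), as you note via minimality.
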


\begin{proof}
The proof of each elements is given as follows,
\begin{enumerate}[label=(\roman{*})]
\item As $\Sigma$ is stable and strictly proper, its $\hdr{\omega}$-norm is given following \eqref{simpleCase}. The limit of $\mathbf{atan}\left ( \frac{\omega}{\lambda_i} \right )$ as $\omega \to \infty$ are given by considering both the definition of $\mathbf{acot}(z)$ and its limits as $z \to 0$, $z\neq 0$ \eqref{acot}-\eqref{limacot} in \ref{appendix}. As $\mathbf{Re}(\lambda_i) <0$ for $i = 1,\ldots,n$, then
\begin{equation}
\displaystyle \lim_{\omega \to \infty} \mathbf{atan}\left (\frac{\omega}{\lambda_i} \right) = - \frac{\pi}{2} \text{ for } i=1,\ldots,n
\end{equation}
Thus $\lim_{\omega \to \infty} \hdro{\Sigma}{\omega}^2  = \tr{\sum_{i=1}^n \phi_i H^T(-\lambda_i)} = \hdo{\Sigma}^2$.\\
\item Similarly as to $(i)$, if $\lambda_i$ is an unstable pole of $H(s)$ then $\lim_{\omega \to \infty} \mathbf{atan}\left ( \frac{\omega}{\lambda}\right ) = \frac{\pi}{2}$. By separating the unstable poles $\lambda_i^{+}$, $i = 1,\ldots,n_{+}<n$ of $H(s)$ from the stable ones $\lambda_i^{-}$, $i = 1,\ldots,n_{-}<n$ such that $n_{+}+n_{-}=n$, it comes that
\begin{equation}
\lim_{\omega \to \infty} \hdro{\Sigma}{\omega}^2  = \tr{\sum_{i=1}^{n_{-}} \phi_i H^T(-\lambda_i ^{-})}- \tr{\sum_{i=1}^{n_{+}} \phi_i H^T(-\lambda_i ^{+})}.
\end{equation}
Note that if $H(s) = C(sI_n-A)^{-1}B$ has only unstable poles, then $\lim_{\omega \to \infty} \hdro{\Sigma}{\omega} = \hdo{\Sigma^-}$ where $\Sigma^{-}$ is the stable reflected system with associted tranfer function $H^-(s)=C(sI_n+A)^{-1}B$ (instead of $C(sI_n-A)^{-1}B$).\\
\item If $H(s)$ has a purely imaginary pole $\lambda_i$, then assumption \eqref{hyp} is violated as $\omega \to \infty$, thus integral \eqref{inte} is infinite.
\end{enumerate}
\end{proof}

\section{Numerical considerations \& illustration}
\label{sec3}
\subsection{Numerical considerations}
The evaluation of the $\hdr{\omega}$-norm can be computed either via the Gramian-based formulation \eqref{h2w:lyap} or via the \emph{spectral expression of Theorem \ref{mainTH}}. Given a continuous LTI dynamical system $H(s) = C(sI-A)^{-1}B +D$, the residues $\phi_i$ of the system can be computed through the right and left eigenvectors $x_i$, $y_i \in \mathbb{C}^n$ of $A$, for $i=1,\dots,n$, defined as
\begin{equation}
A x_i = \lambda_i x_i\mbox{~~, ~}y_i^* A = \lambda_i y_i^*,
\end{equation}
using the following formula \cite{Rommes_model_2008}
\begin{equation}
\phi_i =\displaystyle \frac{C x_i y_i^* B}{y_i^*x_i}\mbox{, for $i=1,\ldots,n$.}
\end{equation}

This implies that the new \emph{spectral expression of the $\hdr{\omega}$-norm} requires the computation of the eigenvalues and the corresponding eigenvectors of $A$ only. This makes the norm computation much easier than using the Gramian formulation. Indeed, while the latter requires evaluating a matrix logarithm and solving a Lyapunov equation (see Section \ref{prelim}), the spectral expression of Theorem \ref{mainTH} only requires solving an eigenvalue problem (\emph{i.e.} matrix/vector operations that can be solved for sparse models) for which many research results provide fast and accurate procedures \cite{saad_iterative_2003,Rommes_model_2008}.

Another advantage of the \emph{spectral expression of the $\hdr{\omega}$-norm}, once the eigenvalues and the residues have been computed, the $\hdr{\omega}$-norm can easily be evaluated for several values of $\omega$ with little extra numerical cost. This is not the case with the Gramian-based formulation where the last terms $W_c(\omega)$ and $W_o(\omega)$ of the Lyapunov equations \eqref{lyap:flg1} and \eqref{lyap:flg2}, which have to be solved for each value of $\omega$, involve the computation of a matrix logarithm depending on $\omega$.

In order to illustrate the computational time from a qualitative point of view, the following test has been carried on$\,$: for a fixed $\omega$ value ($\omega = 100$), the $\hdr{\omega}$-norm of several randomly generated models of dimension $n=2,\dots, 200$, has been computed using each formulation\footnote{Concerning \emph{spectral expression of Theorem \ref{mainTH}}, the eigenvalues and the eigenvectors are computed at each occurrence.} and the corresponding computation time has been measured. For each model, the norm has been evaluated $1000$ times and the mean computational time has been reported on Figure \ref{fig:time}, as a function of the model order $n$.
\begin{figure}[htbp]
\centering
\includegraphics[width = 0.7\columnwidth]{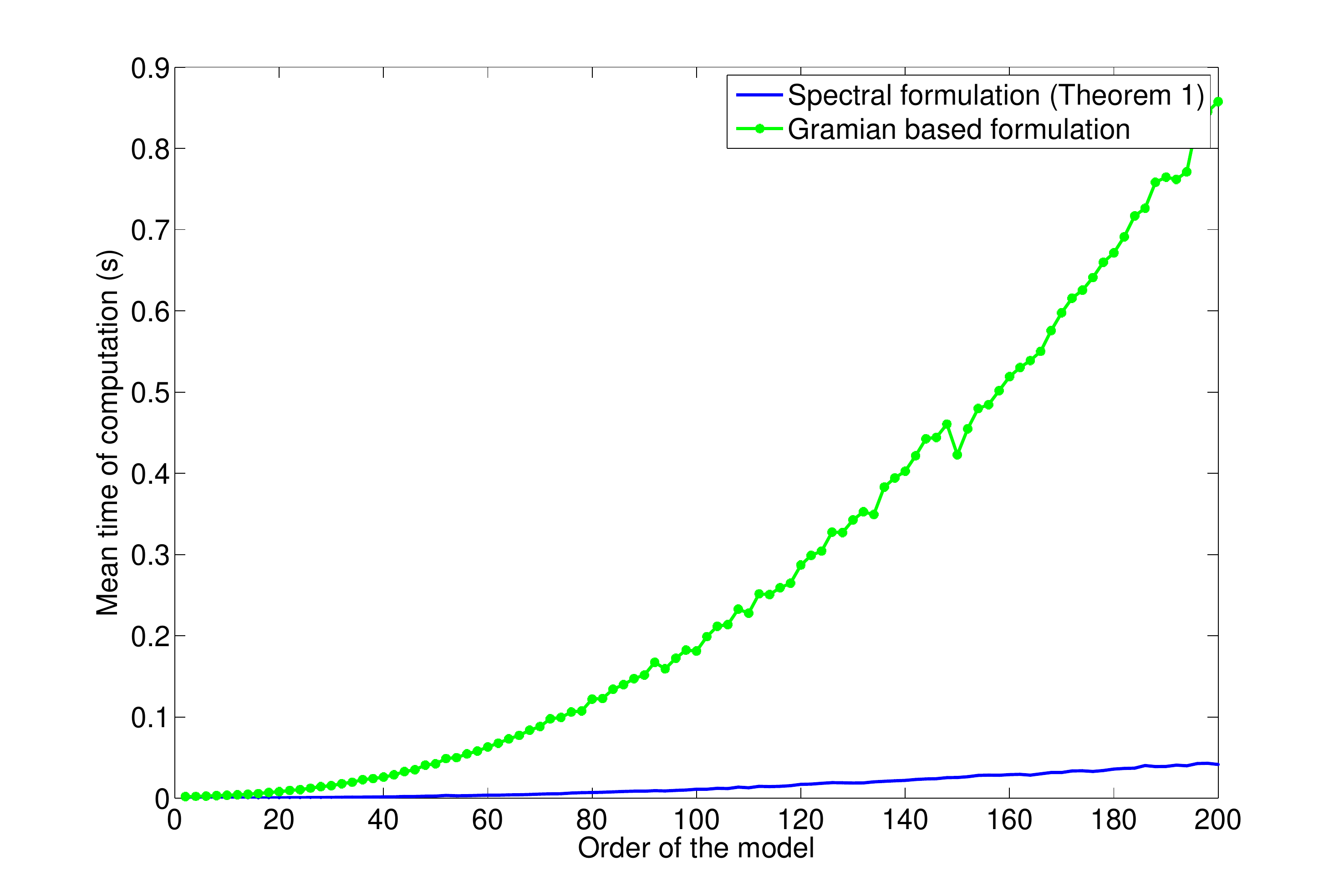}
\caption{Mean computation time of the $\hdr{\omega}$-norm for a fixed $\omega$ with respect to the order $n=2,\dots, 200$ of the randomly generated model. Solid blue line (Spectral expression of the $\hdr{\omega}$-norm, Theorem \ref{mainTH}), Solid green dotted line (Gramian-based $\hdr{\omega}$-norm).}
\label{fig:time}
\end{figure}

The standard deviations of the measures are negligible and have not been plotted. On this example, the lower complexity of the spectral formulation clearly appears in favour of the proposed \emph{spectral expression of the $\mathcal{H}_{2,\omega}$-norm}, especially for high-order models.

\subsection{Illustration of the behaviour of the $\hdr{\omega}$-norm}

Here, the Los-Angeles Hospital model provided in the COMP$l_eib$ library \cite{complib} is considered. This model is a SISO, stable and strictly proper model of order $n=48$. Figure \ref{fig:build} shows the frequency response (top) and its frequency-limited $\hd$-norm as a function of $\omega$ (bottom, solid blue line). Note that here, only the upper bound of the considered frequency interval $\left [ 0,\omega \right ]$ changes. Numerical integration, Gramian-based formulation and the \emph{spectral formulation of Theorem \ref{mainTH}} lead to the same results. This example illustrates the two following points$\,$:
\begin{enumerate}[label=(\roman{*})]
\item{As expected, for stable systems, and as $\omega$ increases, the $\hdr{\omega}$-norm increases and asymptotically tends to the $\hd$-norm (bottom, dotted red line).}
\item{The $\hdr{\omega}$-norm increases by steps. When $\omega$ crosses the abscissa of a peak in the frequency response (which corresponds to the absolute value of the poles of the system), the $\hdr{\omega}$-norm steps-up. The larger the magnitude of the peak is, the bigger the step is. This can easily be understood with the new spectral $\hdr{\omega}$-norm formula. Indeed, \eqref{simpleCase} is a sum of the product between the residues $\phi_i$ and the frequency response associated to $-\lambda_i$, weighted by $\mathbf{atan}\left ( \frac{\omega}{\lambda_i}\right )$, $i = 1,\ldots,n$ (see also Remark \ref{rmq:atan}). It is worth being noticed that $\mathbf{atan}$ have a logarithmic behaviour between each steps.}
\end{enumerate}

\begin{figure}[htbp]
\centering
\includegraphics[width = .8\columnwidth]{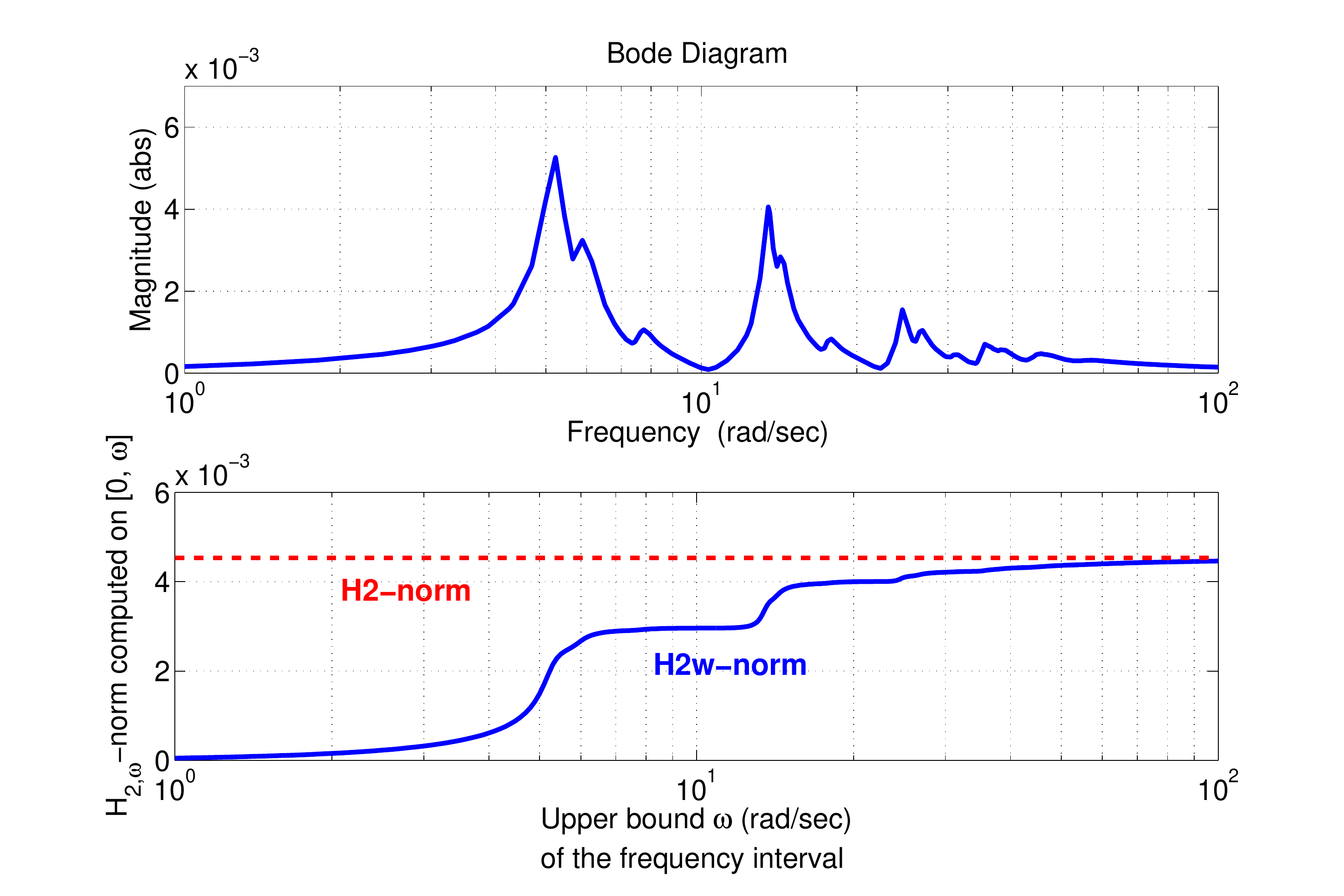}
\caption{Frequency response of the Los-Angeles Hospital model (top) and its $\hdr{\omega}$-norm (bottom) as a function of $\omega$.}
\label{fig:build}
\end{figure}

\section{Conclusions}
\label{ccl}

In this paper, a new simple yet efficient \emph{spectral-based formulation of the frequency limited $\mathcal{H}_2$-norm}, denoted $\hdr{\omega}$-norm, has been presented. For stable and strictly proper systems, this spectral formulation is very similar to the $\hd$-norm presented in \cite{antoulas_approximation_2005}. More especially the newly introduced arctangent term plays the role of weights on the contribution of each modal content and generalizes this formula. In addition to be an alternative way to formulate problems involving the $\hdr{\omega}$-norm (such as controller  design, robust analysis or in model approximation...), this \emph{spectral expression} also offers some numerical advantages over the Gramian-based formulation \eqref{h2w:lyap}. Indeed while the latter requires evaluating the matrix logarithm and solving a Lyapunov equation, the former one only requires the eigenvalues and the eigenvectors computation, which is numerically less expensive. Moreover, this new formulation makes possible the computation of the $\hdr{\omega}$-norm to large-scale systems insofar as sparsity is exploitable and there exists efficient iterative algorithms to get eigenvalues and eigenvectors (see \emph{e.g.} \cite{saad_iterative_2003,Rommes_model_2008}).
To the best of the authors' knowledge, this spectral formulation may help reconsidering some control-oriented problems from a new perspective and provide a consistent tool for many research and applicative topics.

\appendix
\section{Complex function definition and conventions}
\label{appendix}

This appendix is mainly based on \cite{haber_inverse_2011} and is dedicated to the presentation of the definition and the conventions which has been chosen for some complex functions involved in the proof of Theorem \ref{mainTH}. Given a complex number $z$, the principal value of its argument, denoted as $\mathbf{arg}(z)$, is defined such that 
\begin{equation}
-\pi < \mathbf{arg}(z) \leq \pi.
\end{equation}
The function $\mathbf{arg}(z)$ is continuous everywhere in the complex plane excepted on a branch cut along the negative real axis. Each function listed below inherit its branch cuts from the principal value of the argument.

\subsection{Complex logarithm}
\label{app:log}

\begin{definition}[Principal logarithm]
The principal value of the logarithm of $z$, denoted $\mathbf{log}(z)$, is defined for $z \neq 0$ as
\begin{equation}
\mathbf{log}(z) = \mathbf{ln} \left (|z| \right ) + j \mathbf{arg}(z),
\end{equation}
where $\mathbf{ln}(x)$ is the natural logarithm of $x \in \mathbb{R}^{*}_+$.
\end{definition}

\subsection{Complex arctangent}
\label{app:atan}

\begin{definition}[Arctangent]
\label{def:atan}
The principal value of the arctangent functions can be defined in two different ways denoted here $\mathbf{atan}_1(z)$ and $\mathbf{atan}_2(z)$. For $z \neq \pm j$,
\begin{equation}
\mathbf{atan}_1(z) = \displaystyle \frac{1}{2 j} \left [\log{1+j z}-\log{1-jz} \right ]  
\label{atan1}
\end{equation}
and
\begin{equation}
\mathbf{atan}_2(z) = \displaystyle \frac{1}{2j} \log{\frac{1+jz}{1-jz}}.
\label{atan2}
\end{equation}
\end{definition}

It can be shown that $\mathbf{atan}_1(z) = \mathbf{atan}_2(z)$ everywhere in the complex plane excepted on the branch $\left [-j,-j\infty \right )$ \cite{haber_inverse_2011}, indeed
\begin{equation}
\displaystyle \frac{1}{2j} \log{\frac{1+jz}{1-jz}} =\left \lbrace \begin{array}{ll}\displaystyle \pi +\frac{1}{2 j} \left [\log{1+j z}-\log{1-jz} \right ] & \text{ if } \mathbf{Re}(z) = 0\text{ and } \mathbf{Im}(z) <-1 \\ \displaystyle
\frac{1}{2 j} \left [\log{1+j z}-\log{1-jz} \right ] & \text{otherwise.}
\end{array}\right.
\label{ataneq}
\end{equation}

\subsection{Complex arccotangent}
\label{app:acot}

\begin{definition}[Arccotangent]
The definition of the principal value of the arccotangent of $z$, noted $\mathbf{acot}(z)$, is based on the definition of the arctangent function, indeed for $z \neq \pm j$ and $z \neq 0$,
\begin{equation}
\mathbf{acot}(z) = \displaystyle \mathbf{atan}\left (\frac{1}{z}\right).
\label{acot}
\end{equation}
\label{def:acot}
\end{definition}
Definition \ref{def:acot} leads to two formulations for the arccotangent function, $\mathbf{acot}_1(z) = \mathbf{atan}_1(\frac{1}{z})$ and $\mathbf{acot}_2(z) = \mathbf{atan}_2(\frac{1}{z})$, depending on which formulation of the arctangent is considered. These two formulations are equivalent excepted on $\left ] 0 ,j \right [$. Indeed for $z \neq \pm j$ and $z \neq 0$,
\begin{equation}
\displaystyle \frac{1}{2j} \log{\frac{z+j}{z-j}} =\left \lbrace \begin{array}{ll}\displaystyle \pi +\frac{1}{2 j} \left [\log{1+\frac{j}{z}}-\log{1-\frac{j}{z}} \right ] & \text{ if } \mathbf{Re}(z) = 0\text{ and } 0<\mathbf{Im}(z) <-1 \\ \displaystyle
\frac{1}{2 j} \left [\log{1+\frac{j}{z}}-\log{1- \frac{j}{z}} \right ] & \text{otherwise}.
\end{array}\right.
\label{equival}
\end{equation}
Note that $\mathbf{acot}_2(z)$ is defined for $z= 0$ whereas the limit of $\mathbf{acot}_1(z)$ when $z\to 0$ is not unique. Hence
\begin{equation}
\lim_{\underset{z\neq 0}{z\to 0}} \mathbf{acot}_1(z)  = \left \lbrace 
\begin{array}{rl}\frac{\pi}{2}&\text{for }\mathbf{Re}(z) >0\\
\frac{\pi}{2}&\text{for }\mathbf{Re}(z) >0\text{ and }\mathbf{Im}(z)<0\\
-\frac{\pi}{2}&\text{for }\mathbf{Re}(z) <0\\
-\frac{\pi}{2}&\text{for }\mathbf{Re}(z) <0\text{ and } \mathbf{Im}(z)>0\\
\end{array} \right .
\end{equation}
which can be simplified due to equivalence \eqref{equival} in
\begin{equation}
\lim_{\underset{z\neq 0}{z\to 0}} \mathbf{acot}_1(z)  = \left \lbrace 
\begin{array}{rl}\frac{\pi}{2}&\text{for }\mathbf{Re}(z) \geq 0\\
-\frac{\pi}{2}&\text{for }\mathbf{Re}(z) <0.
\end{array} \right .
\label{limacot}
\end{equation}
In the proof of the main result (Theorem \ref{mainTH}), formulation \eqref{atan1} is used to define the complex arctangent, but, due to theorem hypothesis, formulation \eqref{atan2} would have led to the same result.



\bibliographystyle{plain}

\bibliography{Vuillemin_Poussot-Vassal_Alazard-SpectralExpressionforthefrequency-limitedH2norm}

\begin{thebibliography}{10}

\bibitem{Alazard_ASCC06}
D.~Alazard, C.~Cumer, and F.~Delmond.
\newblock Improving flight control laws for load alleviation.
\newblock In {\em Proceedings of the $6^{th}$ Asian Control Conference}, Bali,
  Indonesia, July 2006.

\bibitem{anderson_measure_1991}
M.~R. Anderson, A.~Emami-Naeni, and J.H. Vincent.
\newblock Measures of merit for multivariable flight control.
\newblock Technical report, Systems Control Technology Inc, Palo Alto,
  California, USA, 1991.

\bibitem{antoulas_approximation_2005}
A.~C. Antoulas.
\newblock {\em Approximation of {Large-Scale} Dynamical Systems}.
\newblock Society for Industrial and Applied Mathematics, 2005.

\bibitem{Aouf_H2_2000}
N.~Aouf, B.~Boulet, and R.~Botez.
\newblock $\mathcal{H}_2$ and $\mathcal{H}_\infty$ optimal gust load
  alleviation for a flexible aircraft.
\newblock In {\em Proceedings of the 2000 American Control Conference}, pages
  1872--1876, 2000.

\bibitem{Apkarian:2006}
P.~Apkarian and D.~Noll.
\newblock {Nonsmooth $\mathcal{H}_\infty$ Synthesis}.
\newblock {\em IEEE Transaction on Automatic Control}, 51(1):71--86, January
  2006.

\bibitem{Doyle:1994}
J~C. Doyle, K.~Zhou, K.~Glover, and B.~Bodenheimer.
\newblock {Mixed $\mathcal{H}_2$ and $\mathcal{H}_\infty$ Performance
  Objectives: Optimal Control}.
\newblock {\em IEEE Transaction on Automatic Control}, 39(8):1575--1587, August
  1994.

\bibitem{gawronski_advanced_2004}
Wodek Gawronski.
\newblock {\em Advanced Structural Dynamics and Active Control of Structures}.
\newblock Springer, 1 edition, March 2004.

\bibitem{gugercin_$mathcalh_2$_2008}
S.~Gugercin, A.~C. Antoulas, and C.~Beattie.
\newblock {$\mathcal{H}_2$} model reduction for {Large-Scale} linear dynamical
  systems.
\newblock {\em {SIAM} Journal on Matrix Analysis and Applications},
  30(2):609--638, 2008.

\bibitem{haber_inverse_2011}
Howard~E. Haber.
\newblock The complex inverse trigonometric and hyperbolic functions, 2011.
\newblock University of California.

\bibitem{complib}
F.~Leibfritz and W.~Lipinski.
\newblock Description of the benchmark examples in \textit{COMP}$l_e ib$ 1.0.
\newblock Technical report, University of Trier, 2003.

\bibitem{masi_robust_2010}
A.~Masi, R.~Wallin, A.~Garulli, and A.~Hansson.
\newblock Robust finite-frequency $\mathcal{H}_2$ analysis.
\newblock In {\em Proceedings of the 49th IEEE Conferance on Decision and
  Control}, pages 6876 --6881, December 2010.

\bibitem{peterrson}
D.~Petersson and J.~L\"{o}fberg.
\newblock Model reduction using a frequency-limited $\hd$-cost.
\newblock Technical report, Linköpings universitet, 2012.

\bibitem{PoussotCEP:2012}
C.~Poussot-Vassal and C.~Roos.
\newblock {Generation of a Reduced-Order LPV/LFT Model from a Set of
  Large-Scale MIMO LTI Flexible Aircraft Models}.
\newblock {\em Control Engineering Practice}, 20(9):919--930, September 2012.

\bibitem{saad_iterative_2003}
Yousef Saad.
\newblock {\em Iterative Methods for Sparse Linear Systems, Second Edition}.
\newblock Society for Industrial and Applied Mathematics, 2 edition, April
  2003.

\bibitem{Rommes_model_2008}
W.H.A. Schilders, H.A. {Van der Vorst}, and J.~Rommes.
\newblock {\em Model Order Reduction : Theory, Research Aspects and
  Applications}, volume~13.
\newblock Springer Series Mathematics in Industry, 2008.

\bibitem{VuilleminSSSC:2013}
P.~Vuillemin, C.~Poussot-Vassal, and D.~Alazard.
\newblock {$\mathcal{H}_2$ optimal and frequency limited approximation methods
  for large-scale LTI dynamical systems}.
\newblock In {\em to appear in Proceedings of the $6$th IFAC Symposium on
  Systems Structure and Control}, Grenoble, France, February 2013.

\bibitem{ZhouBook:1997}
K.~Zhou and J~C. Doyle.
\newblock {\em Essentials Of Robust Control}.
\newblock Prentice Hall, 1997.

\end{thebibliography}

\end{document}